\newtheorem{proposition}{Proposition}
\theoremstyle{definition}
\theoremstyle{remark}
\newcommand{\ket}[1]{\ensuremath{|#1\rangle}\xspace}
\newcommand{\bra}[1]{\ensuremath{\langle #1|}\xspace}
\newcommand{\Braket}[3]{\ensuremath{\bra{#1}#2\ket{#3}}\xspace}
\newcommand{\braket}[2]{\ensuremath{\langle #1|#2\rangle}\xspace}
\numberwithin{equation}{section}
\title{$d$-Orthogonal polynomials and $\mathfrak{su}(2)$}
\date{}
\author[1]{Vincent X. Genest\thanks{genestvi@crm.umontreal.ca}}
\author[1]{Luc Vinet\thanks{luc.vinet@umontreal.ca}}
\author[2]{Alexei Zhedanov\thanks{zhedanov@kinetic.ac.donetsk.ua}}
\affil[1]{Centre de recherches math\'ematiques, Universit\'e de Montr\'eal, C.P. 6128, Succursale Centre-ville, Montr\'eal, Qu\'ebec, H3C 3J7, Canada}
\affil[2]{Donetsk Institute for Physics and Technology, Donetsk 83114, Ukraine}
\begin{document}
\maketitle
\thispagestyle{empty}
\hrule
\begin{abstract}
Two families of $d$-orthogonal polynomials related to $\mathfrak{su}(2)$ are identified and studied. The algebraic setting allows for their full characterization (explicit expressions, recurrence relations, difference equations, generating functions, etc.). In the limit where $\mathfrak{su}(2)$ contracts to the Heisenberg-Weyl algebra $\mathfrak{h}_1$, the polynomials tend to the standard Meixner polynomials and $d$-Charlier polynomials, respectively.
\end{abstract}
\small \textbf{Keywords:} \normalsize Generalized hypergeometric polynomials, d-orthogonal polynomials, $\mathfrak{su}(2)$ algebra, special functions.
\bigskip 

\hrule
\section{Introduction}
We identify in this paper two families of $d$-orthogonal polynomials that are associated to $\mathfrak{su}(2)$. When available, algebraic models for orthogonal polynomials provide a cogent framework for the characterization of these special functions. They also point to the likelihood of seeing the corresponding polynomials occur in the description of physical systems whose symmetry generators form the algebra in question.

 $d$-orthogonal polynomials generalize the standard orthogonal polynomials in that they obey higher recurrence relations. They will be defined below and have been seen to possess various applications \cite{deBruin-1985,deBruin-1990,Iseghem-1987}. Recently, two of us have uncovered the connection between $d$-Charlier polynomials and the Heisenberg algebra. Here we pursue this exploration of $d$-orthogonal polynomials related to Lie algebras by considering the case of $\mathfrak{su}(2)$. Remarkably, two hypergeometric families of such polynomials will be identified and characterized.
\subsection{$d$-Orthogonal polynomials}
The monic $d$-orthogonal polynomials $\widehat{P}_{n}(k)$ of degree $n$ ($\widehat{P}_{n}(k)=k^{n}+\ldots$) can be defined by the recurrence relation \cite{Maroni-1989}
\begin{equation}
\label{recu-ortho}
\widehat{P}_{n+1}(k)=k\widehat{P}_{n}(k)-\sum_{\mu=0}^{d}a_{n,n-\mu}\widehat{P}_{n-\mu}(k),
\end{equation}
of order $d+1$ with complex coefficients $a_{n,m}$; the initial conditions are $\widehat{P}_{n}=0$ if $n<0$ and $\widehat{P}_{0}=1$. It is assumed that $a_{n,n-d}\neq 0$ (non-degeneracy condition).

When $d=1$, it is known that under the condition $c_{n}\neq 0$,  the polynomials  satisfying three-term recurrence relations
\begin{equation*}
\widehat{P}_{n+1}(k)=k\widehat{P}_{n}(k)-b_{n}\widehat{P}_{n}(k)-c_{n}\widehat{P}_{n-1}(k),
\end{equation*}
are orthogonal with respect to a linear functional $\sigma$ 
\begin{equation*}
\langle \sigma, \widehat{P}_{n}(k)\widehat{P}_{m}(k)\rangle=\delta_{nm},
\end{equation*}
defined on the space of all polynomials.

When $d>1$, the polynomials $\widehat{P}_{n}(k)$ obey \emph{vector orthogonality relations}. This means that there exists a set of $d$ linear functionals $\sigma_{i}$ for $i=0,\ldots,\,d-1$ such that the following relations hold:
\begin{align*}
\langle \sigma_{i},\widehat{P}_{n}\widehat{P}_m\rangle &=0,\;\text{if}\;m>dn+i, \nonumber\\
\langle \sigma_{i},\widehat{P}_{n}\widehat{P}_{dn+i}\rangle &\neq 0,\;\text{if}\;n\geqslant 0.
\end{align*}

\subsection{$d$-Orthogonal polynomials as generalized hypergeometric functions}
Of particular interest are $d$-orthogonal polynomials that can be expressed in terms of generalized hypergeometric functions (see for instance \cite{VanAssche-2011, VanAssche-2001,deBruin-1984}). These functions are denoted ${}_pF_{q}$ and are defined by
\begin{equation*}
{}_pF_{q}\left[\begin{aligned}\{a_p\}\\\{b_q\}\end{aligned};\frac{1}{c}\right]:=\sum_{\mu=0}^{\infty}\frac{(a_1)_{\mu}\cdots(a_p)_{\mu}}{(b_1)_{\mu}\cdots(b_q)_{\mu}}\frac{c^{-\mu}}{\mu!},
\end{equation*}
where $(m)_0=1$ and $(m)_{k}=(m)(m+1)\cdots(m+k-1)$ stands for the Pochhammer symbol. In the case where one of the $a_{i}$'s is a negative integer, say $a_1=-n$ for $n\in \mathbb{N}$, the series truncates at $\mu=n$ and we can write
\begin{align}
\label{polynomial-form}
{}_{1+s}F_{q}\left[ \begin{aligned}-n,&\;\{a_s\}\\ \{&b_q\}\end{aligned};\frac{1}{c}\right]=\sum_{\mu=0}^{n}\frac{(-n)_{\mu}(a_1)_{\mu}\cdots(a_s)_{\mu}}{(b_1)_{\mu}\cdots(b_q)_{\mu}}\frac{c^{-\mu}}{\mu!}.
\end{align}
If one of the $b_i$'s is also a negative integer, the corresponding sequence of polynomials is finite. 

The classification of $d$-orthogonal polynomials that have a hypergeometric representation of the form \eqref{polynomial-form} has been studied recently in \cite{Cheikh-2008}. The results are as follows.

Let $s\geqslant 1$ and let $\{a_s\}=\{a_1,\ldots,a_{s}\}$ be a set of $s$ polynomials of degree one in the variable $k$. This set is called $s$-separable if there is a polynomial $\pi(y)$ such that $\prod_{i=1}^{s}(a_i(k)+y)=[\prod_{i=1}^{s}a_i(k)]+\pi(y)$; an example of such $s$-separable set is $\{k\,e^{\frac{2\pi ix}{s}},\,x=0\ldots,\,s-1\}$. If the set $\{a_s\}$ is $s$-separable, then there exists only $2(d+1)$ classes of $d$-orthogonal polynomials of type \eqref{polynomial-form} corresponding to the cases:
\begin{enumerate}
\item $s=0,\ldots,d-1$ and $q=d$;
\item $s=d$ and $q=0,\ldots,d-1$;
\item $s=q=d$ and $c\neq1$;
\item $s=q=d+1$ and $c=1$ and $\sum_{i=0}^{d+1}a_{i}(0)-\sum_{i=1}^{d+1}b_i\notin\mathbb{N}$.
\end{enumerate}

Examples of $d$-orthogonal polynomials belonging to this classification have been found in \cite{Cheikh-2003,Vinet-2009}. We shall here provide more examples that are of particular interest as well as cases that fall outside the scope of the classification given above.
\subsection{Purpose and outline}
We investigate in this paper $d$-orthogonal polynomials associated to $\mathfrak{su}(2)$. We shall consider two operators $S$ and $Q$, each defined as the product of exponentials in the Lie algebra elements. We shall hence  determine the action of these operators on the canonical $(N+1)$-dimensional irreducible representation spaces of $\mathfrak{su}(2)$. In both cases, the corresponding matrix elements will be found to be expressible in terms of $d$-orthogonal polynomials, some of them belonging to the above-mentioned classification. The connection with  the Lie algebra $\mathfrak{su}(2)$ will be used to fully characterize the two families of $d$-orthogonal polynomials. The limit as $N\rightarrow \infty$, where $\mathfrak{su}(2)$ contracts to $\mathfrak{h}_1$, shall also be studied. In this limit, the polynomials are shown to tend on the one hand to the standard Meixner polynomials and on the other hand to $d$-Charlier polynomials.

The outline of the paper is as follows. In section $2$, we recall, for reference, basic facts about the $\mathfrak{su}(2)$ algebra and its representations. We also define a set of $\mathfrak{su}(2)$-coherent states and summarize how $\mathfrak{su}(2)$ contracts to the Heisenberg-Weyl algebra in the limit as $N\rightarrow \infty$. We then define the operators $S$ and $Q$ that shall be studied along with their matrix elements. The biorthogonality and recurrence relations of these matrix elements are made explicit from algebraic considerations. Results obtained in \cite{Vinet-2009} and \cite{Vinet-2011}  concerning the Meixner and $d$-Charlier polynomials shall also be recalled. In section 3, the polynomials arising from the operator $S$ are completely characterized. The result involve two families of polynomials for which the generating functions, difference equations, ladder operators, etc. are explicitly provided. The contraction limit is examined in all these instances and shown to correspond systematically to the characterization of the Meixner polynomials. In section 4, the same program is carried out for the operator $Q$; the contraction in this case leads to $d$-Charlier polynomials. We conclude with an outlook.
\section{The algebra $\mathfrak{su}(2)$, matrix elements and orthogonal polynomials}
In this section, we establish notations and definitions that shall be needed throughout the paper.
\subsection{$\mathfrak{su}(2)$ essentials}
\subsubsection{The $\mathfrak{su}(2)$ algebra and its irreducible representations}
The Lie algebra $\mathfrak{su}(2)$ is generated by three operators $J_{0}$, $J_{+}$ and $J_{-}$ that obey the commutation relations
\begin{align}
\label{su2-commutation}
[J_{+},J_{-}]=2J_{0}, && [J_{0},J_{\pm}]=\pm J_{\pm}.
\end{align}
The irreducible unitary representations of $\mathfrak{su}(2)$ are of degree $N+1$, with $N\in\mathbb{N}$. In these representations, $J_{0}^{\dagger}=J_{0}$ and $J_{\pm}^{\dagger}=J_{\mp}$, where $\dagger$ refers to the hermitian conjugate. We shall denote the orthonormal basis vectors by $\ket{N,n}$, for $n=0,\ldots,N$. The action of the generators on those basis vectors is
\begin{align}
\label{action-1}
J_{+}\ket{N,n}=\sqrt{(n+1)(N-n)}\,\ket{N,n+1},&& J_{-}\,\ket{N,n}=\sqrt{n(N-n+1)}\,\ket{N,n-1},
\end{align}
\begin{equation}
\label{action-2}
J_{0}\ket{N,n}=(n-N/2)\ket{N,n}.
\end{equation}
The operators $J_{\pm}$ will often be referred to as ''ladder operators''. Note that the action of $J_{-}$ and $J_{+}$ on the end point vectors is $J_{-}\ket{N,0}=0$ and $J_{+}\ket{N,N}=0$.

 It is convenient to introduce the number operator $\mathcal{N}$, which is such that $\mathcal{N}\ket{N,n}=n\ket{N,n}$; it is easily seen that this operator is related to $J_{0}$ by the formula $\mathcal{N}=J_{0}+N/2$. The most general action of any powers of the ladder operators $J_{\pm}$ on the basis vectors is expressible in terms of the Pochhammer symbol. Indeed, one finds
\begin{align}
\label{action-3}
J_{+}^{k}\ket{N,n}&=\sqrt{\frac{(n+k)!(N-n)!}{n!(N-n-k)!}}\ket{N,n+k}=\sqrt{(-1)^{k}(n+1)_{k}(n-N)_{k}}\ket{N,n+k},\\
\label{action-4}
J_{-}^{k}\ket{N,n}&=\sqrt{\frac{n!(N-n+k)!}{(n-k)!(N-n)!}}\ket{N,n-k}=\sqrt{(-1)^{k}(-n)_{k}(N-n+1)_{k}}\ket{N,n-k}.
\end{align}
These formulas are obtained  by applying \eqref{action-1} on the basis vector and by noting that $\frac{(n+k)!}{n!}=(n+1)_{k}$ and that $\frac{n!}{(n-k)!}=(-1)^k(-n)_{k}$ \cite{Koornwinder-2008}.
\subsubsection{Coherent states}
Let us introduce the $\mathfrak{su}(2)$-coherent states $\ket{N,\eta}$ defined as follows \cite{Perelomov-1986}
\begin{equation}
\label{coherent-states}
\ket{N,\eta}:=\sqrt{\frac{1}{(1+|\eta|^2)^N}}\;\sum_{n=0}^{N}\binom{N}{n}^{1/2}\eta^{n}\ket{N,n},
\end{equation}
where $\eta$ is a complex number. The action of the ladder operators on a specific coherent state $\ket{N,\eta}$ can be computed directly to find:
\begin{align}
\label{action-coherent}
J_{+}\ket{N,\eta}=\eta^{-1}\mathcal{N}\ket{N,\eta}, && J_{-}\ket{N,\eta}=\eta\,(N-\mathcal{N})\ket{N,\eta}.
\end{align}
These relations can be generalized to arbitrary powers of the ladder operators; one writes $\mathcal{N}$ in terms of  $J_{0}$ and uses commutation relations \eqref{su2-commutation} to obtain:
\begin{align}
\label{action-coherent-2}
J_{+}^{k}\ket{N,\eta}&=(-1)^{k}\eta^{-k}(-\mathcal{N})_{k}\ket{N,\eta}, \nonumber\\
J_{-}^{k}\ket{N,\eta}&=(-1)^{k}\eta^{k}(\mathcal{N}-N)_{k}\ket{N,\eta}.
\end{align}
Note that the formulas involving the Pochhammer symbols are to be treated formally.
\subsubsection{Contraction of $\mathfrak{su}(2)$ to $\mathfrak{h}_1$}
The contraction of $\mathfrak{su}(2)$ is the limiting procedure by which $\mathfrak{su}(2)$ reduces to the Heisenberg-Weyl algebra. The Heisenberg algebra is generated by the creation-annihilation operators $a^{\dagger}$, $a$ and the identity operator. This algebra is denoted $\mathfrak{h}_1$ and defined by the commutation relations
\begin{align}
\label{Heisenberg-commutation}
[a,a^{\dagger}]=\mathfrak{id},\;\text{and}\;\; [a,\mathfrak{id}]=[a^{\dagger},\mathfrak{id}]=0,
\end{align}
where $\mathfrak{id}$ stands for the identity operator. The contraction corresponds to taking the limit as $N\rightarrow\infty$; in order for this limit to be well defined, the ladder operators must be rescaled by a factor $\sqrt{N}$. The contracted ladder operators are denoted by
\begin{equation*}
J_{\pm}^{(\infty)}=\lim_{N\rightarrow\infty}\frac{J_{\pm}}{\sqrt{N}}.
\end{equation*}
In this limit, the action of the ladder operators on the basis vectors $\ket{N,n}$ tends to the action of the creation-annihilation operators in the irreducible representation of the $\mathfrak{h}_1$ algebra, which is infinite-dimensional. Indeed, the following formulas are easily derived:
\begin{align*}
\lim_{N\rightarrow\infty}\frac{J_{+}}{\sqrt{N}}\,\ket{N,n}&=\sqrt{n+1}\,\ket{n+1},\\
\lim_{N\rightarrow\infty}\frac{J_{-}}{\sqrt{N}}\,\ket{N,n}&=\sqrt{n}\,\ket{n-1}.
\end{align*}
This limit shall be used to establish the correspondence with studies associated to the Heisenberg-Weyl algebra;  see \cite{Vinet-2009}.
\subsection{Operators and their matrix elements}
The operators $S$, $Q$ and their matrix elements, which will be the central objects of study, are now defined.
\subsubsection{$S$, $Q$, their matrix elements and biorthogonality}
Let $a$ and $b$ be complex parameters; we define
\begin{align*}
S&:=e^{a J_{+}^2}e^{b J_{-}^2},\\
Q&:=e^{aJ_{+}}e^{bJ_{-}^{M}},
\end{align*}
with $M\in\mathbb{N}$. These operators can be represented by $(N+1)\times(N+1)$ matrices; their matrix elements are defined as
\begin{align*}
\psi_{k,n}&:=\Braket{k,N}{S}{N,n},\\
\varphi_{k,n}&:=\Braket{k,N}{Q}{N,n}.
\end{align*}
The two operators $S$, $Q$ are obviously invertible, with inverses given by $S^{-1}=e^{-bJ_{-}^2}e^{-aJ_{+}^2}$ and $Q^{-1}=e^{-bJ_{-}^{M}}e^{-aJ_{+}}$. We denote by $\chi_{n,k}=\Braket{n,N}{S^{-1}}{N,k}$ the matrix elements of $S^{-1}$; this leads to the following biorthogonality relation:
\begin{equation*}
\sum_{k=0}^{N}\chi_{m,k}\psi_{k,n}=\sum_{k=0}^{N}\Braket{m,N}{S^{-1}}{N,k}\Braket{k,N}{S}{N,n}=\Braket{m,N}{S^{-1}S}{N,n}=\delta_{nm},
\end{equation*}
where we have used the identity $\sum_{k=0}^{N}\ket{N,k}\bra{k,N}=\mathfrak{id}$, which follows directly from the orthonormality of the basis $\{\ket{N,k}\}_{k=0}^{N}$. A similar relation can be written for the matrix elements of $Q^{-1}$. If one defines $\varsigma_{n,k}=\Braket{n,N}{Q^{-1}}{N,k}$, then
\begin{equation*}
\sum_{k=0}^{N}\varsigma_{m,k}\varphi_{k,n}=\delta_{nm}.
\end{equation*}
\subsubsection{Recurrence relations and polynomial solutions}
The algebraic nature of the operators $S$ and $Q$ leads to recurrence relations satisfied by the matrix elements $\psi_{k,n}$ and $\varphi_{k,n}$. Let us start by showing how the recurrence relation for the $\psi_{k,n}$'s arises. The first step is to observe that
\begin{equation}
\label{recu-1}
(k-N/2)\psi_{k,n}=\Braket{k,N}{J_{0}S}{N,n}=\Braket{k,N}{SS^{-1}J_0S}{N,n}.
\end{equation}
The quantity $S^{-1}J_{0}S$ is computed using the Baker-Campbell-Hausdorff formula:
\begin{equation*}
e^{X}Ye^{-X}=Y+[X,Y]+\frac{1}{2!}[X,[X,Y]]+\frac{1}{3!}[X,[X,[X,Y]]]+\ldots.
\end{equation*}
Applying this formula and the relations of appendix A, one readily finds
\begin{equation*}
S^{-1}J_{0}S=J_{0}-2bJ_{-}^{2}+2a[J_{+}+2b(1+2J_{0})J_{-}-4b^2J_{-}^3]^2.
\end{equation*}
Substituting this result into \eqref{recu-1} yields for $\psi_{k,n}$ the  recurrence relation
\begin{align*}
2a\sqrt{(n+1)_{2}(n-N)_2}\,\psi_{k,n+2}&=(k-n)\,\psi_{k,n}+2b\sqrt{(-n)_2(N-n+1)_2}\,\psi_{k,n-2} \nonumber \\
&+2ab\sum_{t=0}^{3}(-b)^{t}\xi_{t}(n,N)\sqrt{(-n)_{2t}(N-n+1)_{2t}}\,\psi_{k,n-2t},
\end{align*}
where the coefficients $\xi_{t}(n,N)$ are given by:
\begin{align}
\label{coefficients-recu-1}
\xi_0(n,N)&:=2[2n-N][2n^2-2nN-N+1],\nonumber\\
\xi_{1}(n,N)&:=4[6n^2-6n(N+2)+N^2+5N+9],\nonumber\\
\xi_{2}(n,N)&:=16(2n-N-4),\\
\xi_{3}(n,N)&:=16\nonumber.
\end{align}
The recurrence relation for the matrix elements $\psi_{k,n}$ is not of the form \eqref{recu-ortho}. However, the jumps on the index $n$ are all even and it is clear from \eqref{action-2} that any $\psi_{k,n}$ with different index parity will be zero. Thus, setting $n=2j+q$, $k=2\ell+q$ with $q=0,1$ and introducing the monic polynomial $\psi_{k,n}=\psi_{k,q}\,\left(a^{-j}\sqrt{\frac{(N-n)!}{(N-q)!n!}}\,\right)\,\widehat{A}_{j}^{(q)}(\ell)$, the recurrence relation becomes 
\begin{align}
\label{recu-A}
\widehat{A}_{j+1}^{(q)}(\ell)&=(\ell-j)\widehat{A}_{j}^{(q)}(\ell)+c[(-n)_{2}(N-n+1)_2]\widehat{A}_{j-1}^{(q)}(\ell) \nonumber \\
&+c\sum_{t=0}^{3}(-c)^{t}\xi_{t}(n,N)\,[(-n)_{2t}(N-n+1)_{2t}]\;\widehat{A}_{j-t}^{(q)}(\ell),
\end{align}
with $c=ab$. This relation is precisely of the form \eqref{recu-ortho}; we thus conclude that the matrix elements $\psi_{k,n}$ are given in terms of two families of $d$-orthogonal polynomials with $d=3$ corresponding to $q=0$ and $q=1$ . These two families are fully characterized in the next section.

The recurrence relation for the matrix elements $\varphi_{k,n}$ can be obtained in the same fashion. It is first noted that
\begin{equation*}
(k-N/2)\,\varphi_{k,n}=\Braket{k,N}{J_{0}Q}{N,n}=\Braket{k,N}{QQ^{-1}J_0Q}{N,n}.
\end{equation*}
The computation of $Q^{-1}J_{0}Q$ proceeds along the same lines. The result is
\begin{equation}
\label{result-conjugaison}
Q^{-1}J_{0}Q=J_{0}+aJ_{+}-MbJ_{-}^M+abM\big(M-1+2J_{0}\big)J_{-}^{M-1}-ab^2M^2J_{-}^{2M-1}.
\end{equation}
Once again, introducing the monic polynomial $\varphi_{k,n}=\varphi_{k,0}\left(a^{-n}\sqrt{\frac{(N-n)!}{N!n!}}\right)\;\widehat{B}_{n}(k)$, this becomes
\begin{align}
\label{recu-B}
\widehat{B}_{n+1}(k)&=(k-n)\widehat{B}_{n}(k)+f\,\zeta_{M}(n,N)\,\widehat{B}_{n-M}(k) \nonumber\\
&+f\,\zeta_{M-1}(n,N)\,\widehat{B}_{n+1-M}
+f^2\,\zeta_{2M-1}(n,N)\,\widehat{B}_{n+1-2M},
\end{align}
with $f=a^{M}b$ and where the coefficients $\zeta(n,N)$ are: 
\begin{align*}
\zeta_M(n,N)&:=(-1)^{M}M(-n)_{M}(N-n+1)_{M},\\
\zeta_{M-1}(n,N)&:=(-1)^{M}M(-n)_{M-1}(N-n+1)_{M-1}(2n-M-N+1),\\
\zeta_{2M-1}(n,N)&:=(-1)^{2M-1}M^2(-n)_{2M-1}(N-n+1)_{2M-1}.
\end{align*}
Therefore, the polynomials $\widehat{B}_{n}(k)$ are $d$-orthogonal with $d=2M-1$.
\subsubsection{Contractions and the $\mathfrak{h}_{1}$ algebra}
It is relevant at this point to recall related results obtained in connection with the Heisenberg algebra $\mathfrak{h}_1$. These are to be compared with the contractions of the polynomials $\widehat{A}_{j}^{(q)}(\ell)$ and $\widehat{B}_{n}(k)$ obtained in the next sections.

In \cite{Vinet-2011}, two of us investigated the matrix elements
\begin{equation*}
\psi_{k,n}^{(\infty)}=\Braket{k}{e^{b( a^{\dagger})^2}e^{c\,a^2}}{n},
\end{equation*}
where $a$ and $a^{\dagger}$ are the generators of $\mathfrak{h}_1$ and the vectors $\ket{n}$ are the basis vectors of its irreducible representation. It was shown that these matrix elements are given in terms of two series of Meixner polynomials with different parameters. Indeed, with $n=2j+q$ and $k=2\ell+q$, one finds\footnote{See appendix B for definition and properties of Meixner polynomials $M_{n}(x;\beta,z)$.}
\begin{equation*}
\psi_{k,n}^{(\infty)}\propto M_{j}(\ell;1/2+q, z).
\end{equation*}
It is clear that this operator corresponds to the contraction of the operator $S$ previously defined. Thus, the polynomials $\widehat{A}^{(q)}_{j}(k)$ are expected to tend to the Meixner polynomials in the limit as $N\rightarrow \infty$ and can be interpreted as a $d$-orthogonal finite ``deformation`` of Meixner polynomials. 

In \cite{Vinet-2009}, we investigated the matrix elements
\begin{equation*}
\varphi_{k,n}^{(\infty)}=\Braket{k}{e^{\beta a^{\dagger}}e^{\sigma a^{M}}}{n}.
\end{equation*}
These matrix elements were found to be $d$-Charlier polynomials with $d=M$. These matrix elements correspond to the contraction of the matrix elements of the operator $Q$ just defined. Consequently, it is expected that the contraction of the operator $Q$ will lead to these $d$-Charlier. Moreover, the matrix elements $\varphi_{k,n}$ are expected to  yield the standard Krawtchouk polynomials $K_{n}(x;p,N)$ when\footnote{See appendix B for definition and properties of Krawtchouk polynomials $K_{n}(x;p,N)$.} $M=1$; the cases $M\neq 1$ correspond therefore to some $d$-Krawtchouk polynomials.
\section{Characterization of the $\widehat{A}_{j}^{(q)}(\ell)$ family}
We shall now completely characterize the family of $d$-orthogonal polynomials arising from the matrix elements of the operator $S=e^{aJ_{+}^2}e^{bJ_{-}^2}$; these matrix elements have already been shown to satisfy the recurrence relation \eqref{recu-A}. The general properties are computed first and contractions are studied thereafter.
\subsection{Properties}
\subsubsection{Explicit matrix elements}
We first look for the explicit expression of the matrix elements $\psi_{k,n}=\Braket{k,N}{S}{N,n}$. This expression is obtained by setting $n=2j+q$ and $k=2\ell+q$, expanding the exponentials in series, using the actions \eqref{action-3} and \eqref{action-4} and recalling the identity $(a)_{2n}=2^{2n}\left(\frac{a}{2}\right)_{n}\left(\frac{a+1}{2}\right)_{n}$ as well as $(2n+q)!=2^{2n}\,q!\,n!\,(q+1/2)_{n}$. Extracting the factor 
$$
\psi_{k,q}=\frac{a^{\ell}}{\ell !}\sqrt{\frac{(N-q)!k!}{(N-k)!}},
$$
and pulling out the normalization factor ensuring that the polynomials are monic $(a)^{-j}\sqrt{\frac{(N-n)!}{(N-q)!n!}}$ yields
\begin{equation}
\psi_{k,n}=\left((a)^{-j}\sqrt{\frac{(N-n)!}{(N-q)!n!}}\,\right)\widehat{A}_{j}^{(q)}(\ell;c,N)\;\psi_{k,q},
\end{equation}
where we have set
\begin{align}
\widehat{A}_{j}^{(q)}(\ell;c,N):=\frac{c^{j}}{j!}\frac{(N-q)!n!}{(N-n)!}\;{}_2F_{3}\left[
\begin{matrix}
-j & -\ell & \\
q+1/2 & \frac{q-N}{2} & \frac{q-N+1}{2}
\end{matrix};\frac{1}{16c}
\right],
\end{align}
with $c=ab$. It is understood that if $n$ and $k$ have different parities, the matrix element $\psi_{k,n}$ is zero.

We thus have an explicit representation of the polynomials $\widehat{A}_{j}^{(q)}(k;c,N)$  in terms of generalized hypergeometric functions. These polynomials satisfy the recurrence relation \eqref{recu-1}. Moreover, the fact that $\widehat{A}_{j}^{(q)}(k;c,N)$ is expressed as a ${}_2F_{3}$ indicates that these polynomials belong to the classification proposed in \cite{Cheikh-2008}. Indeed, it is clear that the singleton $\{-\ell\}$ is $1$-separable; consequently, the polynomials are of the form \eqref{polynomial-form} with $s=1$. Thus, the polynomials $\widehat{A}_{j}^{(q)}(k;c,N)$ are examples of $d$-orthogonal polynomials corresponding to the case $1$ of the classification with $s=1$ and $q=d=3$.
\subsubsection{Explicit inverse matrix elements}
The matrix elements $\chi_{n,k}$ of the inverse operator $S^{-1}=e^{-bJ_{-}^2}e^{-aJ_{+}^2}$ can also be evaluated explicitly; they can be computed either by directly expanding the exponentials or by inspection. Indeed, one finds the matrix elements of the inverse to be
\begin{equation}
\label{inverse-A}
\chi_{n,k}=\psi_{N-k,N-n}^{\star},
\end{equation}
with $\star$ denoting the substitutions $a\rightarrow -a$ and $b\rightarrow -b$. If $N$ is an even number of the form $N=2p+2q$, the inverse is given by:
\begin{equation}
\label{explicit-inverse}
\chi_{n,k}=(-1)^{j-\ell}\frac{(a)^{j-\ell}}{(p-\ell)!}\sqrt{\frac{(N-k)!n!}{k!(N-n)!}}\widehat{A}_{p-j}^{(q)}(p-\ell;c,N),
\end{equation}
with $n=2j+q$ and $k=2\ell+q$. If $N$ is an odd integer, the cases $q=0$ and $q=1$ have to be treated separately; since no further difficulty arise and $\chi_{n,k}$ is expressed directly in terms of $\psi_{n,k}$, we omit the details.
\subsubsection{Biorthogonality relation}
As pointed out in section 2.2.1, the matrix elements of the inverse $S^{-1}$ provide the polynomials entering in the biorthogonality relations of the $\widehat{A}^{(q)}_{j}(\ell;c,N)$ family. In view of \eqref{inverse-A}, this biorthogonality relation reads
\begin{equation}
\label{biortho-A-1}
\sum_{k=0}^{N}\psi_{k,n}\psi_{N-k,N-m}^{\star}=\delta_{nm}.
\end{equation}
If $N=2p+2q$, the biorthogonality relation involves two $\widehat{A}_{j}^{(q)}(\ell;c,N)$ with the same $q$; in the case where $N=2p+1$, the relation will involve two polynomials with different values of $q$.

Firstly,  set $N=2p+2q$ and $m=2j'+q$; the biorthogonality relation \eqref{biortho-A-1} becomes
\begin{equation*}
\sum_{\ell=0}^{p}w_{\ell}(p)\widehat{A}^{(q)}_{j}(\ell;c,N)\widehat{A}_{p-j'}^{(q)}(p-\ell;c,N)=(-1)^{j}\delta_{jj'}
\end{equation*}
with the weight $w_{\ell}(p)=\frac{(-1)^{\ell}}{\ell!(p-\ell)!}$.

 Secondly, set $N=2p+1$ and $m=2j'+q$;  the biorthogonality relation is then of the form
\begin{equation*}
\sum_{\ell=0}^{p}w_{\ell}(p)\widehat{A}^{(0)}_{j}(\ell;c,N)\widehat{A}_{p-j'}^{(1)}(p-\ell;c,N)=(-1)^{j}\delta_{jj'}.
\end{equation*}
A similar relation can be obtained with the $q=0$ and $q=1$ polynomials in a different order. Thus, the two classes of polynomials corresponding to the values $q=0$ and $q=1$ of the $\widehat{A}^{(q)}_{j}(\ell;c,N)$ family are interlaced when the degree of the representation is odd and independent if the degree is even.
\subsubsection{Generating function}
We now derive the generating function for the $\widehat{A}_{j}^{(q)}(\ell;c,N)$ polynomials. This derivation is based on the action of the operator $S$ on coherent states $\ket{N,\eta}$. Consider the following function:
\begin{equation}
G(k;\eta):=\frac{1}{\psi_{k,q}}\sqrt{\frac{a^{-q}(N-q)!}{N!}}\sum_{n=0}^{N}\binom{N}{n}^{1/2}\psi_{k,n}\;\eta^{n}.
\end{equation}
Upon substitution of the expression for the matrix elements, one finds
\begin{equation*}
G(k;\eta)=\sum_{n=0}^{N}\widehat{A}_{j}^{(q)}(\ell;c,N)\frac{(\eta/\sqrt{a})^{n}}{n!},
\end{equation*}
setting $G(k;\eta)$ as a generating function for the polynomials $\widehat{A}_{j}^{(q)}(\ell;c,N)$. This generating function $G(k;\eta)$ is in fact the matrix element of $S$ with respect to the coherent state $\ket{N,\eta}$. Indeed, with definition \eqref{coherent-states}, it follows that
\begin{align*}
G(k;\eta)&=\frac{1}{\psi_{k,q}}\sqrt{\frac{a^{-q}(N-q)!}{N!}}\sum_{n=0}^{N}\Braket{k,N}{S}{N,n}\braket{n,N}{N,\eta}\nonumber\\
&=\frac{1}{\psi_{k,q}}\sqrt{\frac{a^{-q}(N-q)!}{N!}}\;\Braket{k,N}{S}{N,\eta},
\end{align*}
where the normalization factor was omitted. The matrix element $\Braket{k,N}{S}{N,\eta}$ can be evaluated by using the spectral decomposition. One first decomposes the matrix element as:
$$\Braket{k,N}{S}{N,\eta}=\sum_{\mu=0}^{N}\Braket{k,N}{e^{aJ_{+}^2}}{N,\mu}\Braket{\mu,N}{e^{bJ_{-}^2}}{N,\eta}.
$$
The action of the ladder operators $J_{\pm}$ on the coherent states can then be used to obtain:
\begin{equation}
G(\ell;q;\eta)=(e^{i\pi/2})^{N+q}(-\eta\sqrt{b})^N\sum_{m=0}^{\lfloor (N-q)/2\rfloor}\frac{(1/\sqrt{c})^{2m+q}}{(2m+q)!}(-\ell)_{m}\;H_{N-2m-q}\left(\frac{e^{i\pi/2}}{2\eta \sqrt{b}}\right),
\end{equation}
 where $H_{n}$ is the standard Hermite polynomial\footnote{See appendix B for definition of the Hermite polynomials.}.
\subsubsection{Difference equation}
In a fashion dual to the way the recurrence relation was obtained, the difference equation satisfied by the matrix elements $\psi_{k,n}$ can be derived with the help of the Baker-Campbell-Hausdorff formula and the formulas from appendix A. First, observe that
\begin{equation*}
(n-N/2)\psi_{k,n}=\Braket{k,N}{SJ_{0}}{N,n}=\Braket{k,N}{SJ_{0}S^{-1}S}{N,n}.
\end{equation*}
The operator $SJ_{0}S^{-1}$ must be evaluated here. Using the formulas given in the appendix, one obtains
\begin{equation*}
SJ_{0}S^{-1}=J_{0}-2aJ_{+}^2+2b[J_{-}+2aJ_{+}(1+2J_{0})-4a^2J_{+}^3]^2.
\end{equation*}
Substituting the result, one finds
\begin{align*}
(n-k)\psi_{k,n}&=2b\sqrt{(k+1)_2(k-N)_{2}}\;\psi_{k+2,n}-2a\sqrt{(-k)_2(N-k+1)_2}\;\psi_{k-2,n}	\nonumber\\
&-2ab\sum_{t=0}^{3}(-a)^{t}\xi_{t}(k,N)\sqrt{(-k)_{2t}(N-k+1)_{2t}}\;\psi_{k-2t,n},
\end{align*}
where the coefficients $\xi_{t}(k,N)$ are those found in \eqref{coefficients-recu-1}.
The difference equation for the matrix elements $\psi_{n,k}$ can straightforwardly be turned into a difference equation for the family of polynomials $\widehat{A}_{j}^{(q)}(k;c,N)$. Indeed, we have
\begin{align}
(j-\ell)\widehat{A}^{(q)}_{j}(\ell;c,N)&=c\;\Omega_{\ell}(q;N)\;\widehat{A}_{j}^{(q)}(\ell+1;c,N)-\ell\, \widehat{A}_{j}^{(q)}(\ell-1;c,N)\nonumber\\
&-c\sum_{t=0}^{3}(-\ell)_{t}\xi_{t}(2\ell+q,N)\;\widehat{A}_{j}^{(q)}(\ell-t;c,N),
\end{align}
with
 $$
\Omega_{\ell}(N)=\frac{1}{\ell+1}(2\ell+q+1)_{2}(2\ell+q-N)_{2}.
$$
This difference equation can be written as an eigenvalue problem; denoting $\nabla f(x)=f(x)-f(x-1)$, $\Delta f(x)=f(x+1)-f(x)$ and using the well-known identities
\begin{align*}
f(x+t)&=\sum_{w=0}^{t}\binom{t}{w}\Delta^{w}f(x),\\
 f(x-t)&=\sum_{w=0}^{t}(-1)^{w}\binom{t}{w}\nabla^{w}f(x),
\end{align*}
the recurrence relation can be written as
\begin{align*}
\mathcal{H}^{(q)}\widehat{A}_{j}^{(q)}(\ell;c,N)=j\widehat{A}_{j}^{(q)}(\ell;c,N),
\end{align*}
with
\begin{equation*}
\mathcal{H}^{(q)}=\ell\;\nabla+c\;\Omega_{\ell}(q;N)\sum_{w=0}^{1}\Delta^{w}-c\sum_{t=0}^{3}(-\ell)_{t}\xi_{t}(2\ell+q,N)\sum_{w=0}^{t}(-1)^{w}\binom{t}{w}\nabla^{w}.
\end{equation*}
\subsubsection{Shift operators}
The shift operators can also be derived from the Baker--Campbell--Hausdorff formula and the formulas from appendix A.
We have
\begin{equation*}
\sqrt{(-n)_2(N-n+1)_2}\;\psi_{k,n-2}=\Braket{k,N}{SJ_{-}^2}{N,n}=\Braket{k,N}{SJ_{-}^2S^{-1}S}{N,n}.
\end{equation*}
One then easily obtains
\begin{equation*}
SJ_{-}^2S^{-1}=[J_{-}+2aJ_{+}(1+2J_{0})-4a^2J_{+}^3]^2.
\end{equation*}
Upon substitution in the previous equation, one gets
\begin{align}
\sqrt{(-n)_2(N-n+1)_2}\;\psi_{k,n-2}&=\sqrt{(k+1)_{2}(k-N)_{2}}\,\psi_{k+2,n}\nonumber\\
&-a\sum_{t=0}^{3}(-a)^t\xi_{t}(k,N)\sqrt{(-k)_{2t}(N-k+1)_{2t}}\,\psi_{k-2t,n},
\end{align}
where the coefficients $\xi_{t}(k,N)$ are those given in \eqref{coefficients-recu-1}.
Using the identities involving the difference operators introduced above, we obtain
\begin{equation*}
F^{(q)}\widehat{A}_{j}^{(q)}(\ell;c,N)=(-2j-q)_{2}(N-2j-q+1)_{2}\;\widehat{A}_{j-1}^{(q)}(\ell;c,N),
\end{equation*}
with
\begin{equation*}
F^{(q)}=\Omega_{\ell}(q;N)\sum_{w=0}^{1}\Delta^{w}-\sum_{t=0}^{3}(-\ell)_{t}\xi_{t}(k,N)\sum_{w=0}^{t}(-1)^{w}\binom{t}{w}\nabla^{w}.
\end{equation*}
The computation of the backward shift operator for the $\widehat{A}_{j}^{(q)}(\ell;c,N)$ is much more involved; the operator contains terms up to order $18$. The explicit expression will not be given.
\subsubsection{$d$-Orthogonality functionals}
From the recurrence relation \eqref{recu-1} computed in section $2$, we concluded that the polynomials $\widehat{A}_{j}(\ell;c,N)$ are $d$-orthogonal with $d=3$. We now provide the functionals with respect to which the $\widehat{A}_{j}(\ell;c,N)$ polynomials are orthogonal.

The initial step is to compute the recurrence relation satisfied by the matrix elements $\chi_{n,k}$ of the inverse operator. First, we note that
\begin{equation*}
(k-N/2)\;\chi_{n,k}=\Braket{n,N}{S^{-1}J_{0}}{N,k}=\Braket{n,N}{S^{-1}J_{0}SS^{-1}}{N,k}.
\end{equation*}
Proceeding along similar lines as before, we get
\begin{equation*}
S^{-1}J_{0}S=J_{0}-2bJ_{-}^2+2a[J_{+}+2b(1+2J_{0})J_{-}-4b^2J_{-}^3]^2.
\end{equation*}
We then find the recurrence relation to be
\begin{align}
(k-n)\;\chi_{n,k}&=2a\sqrt{(-n)_2(N-n+1)_{2}}\;\chi_{n-2,k}-2b\sqrt{(n+1)_2(n-N)_{2}}\;\chi_{n+2,k}\nonumber\\
&-2ab\sum_{t=0}^{3}(-b)^{t}\sigma_{t}(n,N)\sqrt{(n+1)_{2t}(n-N)_{2t}}\;\chi_{n+2t,k},
\end{align}
where the coefficients are:
\begin{align*}
\sigma_0(n,N)&:=2(2n-N)(1+2n(n-N)-N),\\
\sigma_{1}(n,N)&:=4(6n^2-6n(N+2)+N^2-7N+9),\\
\sigma_{2}(n,N)&:=16(2n-N+4),\\
\sigma_{3}(n,N)&:=16.
\end{align*}
From this recurrence relation, the following proposition can be stated.
\begin{proposition}
We have
\begin{equation*}
\chi_{2j+q,2\ell+q}=\sum_{i=0}^{2}Y_{i}^{(j)}\;\Xi_{i}(\ell),
\end{equation*}
where $\Xi_{i}(\ell)=\chi_{2i+q,2\ell+q}$ and $Y_{i}$ is a polynomial in the variable $\ell$. The degree of the polynomial is determined as follows. Suppose $j=3\gamma+\delta$ with $\delta=0,1,2$; then, if $i\leq\delta$, the degree of the polynomial is equal to $\gamma$, otherwise it is equal to $\gamma-1$.
\end{proposition}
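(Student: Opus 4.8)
\medskip
\noindent\emph{Proof proposal.}
The idea is to extract both assertions directly from the recurrence relation for the inverse matrix elements $\chi_{n,k}$ obtained just above. Setting $n=2j+q$ and $k=2\ell+q$, that relation involves $\chi$ at the five first-indices $j-1,j,j+1,j+2,j+3$. Its leading coefficient, the one multiplying $\chi_{2(j+3)+q,k}$, is proportional to $\sqrt{(n+1)_6(n-N)_6}$ and is nonzero away from the top of the spectrum, so the recurrence can be solved for the highest term, expressing $\chi_{2(j+3)+q,2\ell+q}$ as a linear combination of the four lower ones with coefficients that are polynomials of degree at most one in $\ell$ (the only $\ell$-dependence entering through the factor $k-n=2(\ell-j)$). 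The crucial point is that the coefficient of the lowest term, $\chi_{2(j-1)+q,k}$, equals $2a\sqrt{(-n)_2(N-n+1)_2}$ and vanishes exactly when $2j+q\in\{0,1\}$, that is at $j=0$. Hence the very first application of the recurrence, which produces the index $j+3=3$ from $n=q$, does not see the out-of-range index $-1$, and $\chi_{2\cdot 3+q,\cdot}$ is already a combination of the three seeds $\Xi_0,\Xi_1,\Xi_2$.

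First I would define $Y_i^{(j)}(\ell)$, for each fixed $i\in\{0,1,2\}$, as the forward solution of this reduced four-term recurrence with the initial data $Y_i^{(j')}=\delta_{ij'}$ for $j'=0,1,2$. Then $\sum_{i=0}^2 Y_i^{(j)}\Xi_i$ and $\chi_{2j+q,2\ell+q}$ satisfy the same recurrence with identical values at $j=0,1,2$; since the vanishing of the bottom coefficient at $j=0$ makes the forward propagation from three seeds unambiguous, the two sequences coincide, which is the claimed expansion, with each $Y_i^{(j)}$ a polynomial in $\ell$ (only the constant leading coefficient is divided out). For the degrees, I would run an induction on $j$: in the reduced recurrence exactly one coefficient, the one multiplying the index-$j$ term, has degree one in $\ell$, the other three being $\ell$-independent, so at most one term raises the degree by one. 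This yields at once the bound $\deg Y_i^{(j)}\le\lfloor (j-i)/3\rfloor$, and substituting $j=3\gamma+\delta$ turns $\lfloor (j-i)/3\rfloor$ into $\gamma$ for $i\le\delta$ and $\gamma-1$ for $i>\delta$, matching the statement. The cases $j\in\{0,1,2\}$, where $Y_i^{(j)}=\delta_{ij}$, are the trivial base of the induction.

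The main obstacle is promoting this degree bound to the asserted equality, i.e. showing the top coefficient of $Y_i^{(j)}$ does not vanish. When $j-i\equiv 0\pmod 3$ the maximal $\ell$-power is contributed only by the index-$(j-3)$ term and nonvanishing is automatic; but when $j-i\equiv 1$ or $2\pmod 3$ the index-$(j-1)$ term, and in the second case also the index-$(j-2)$ term, reach the same $\ell$-degree, so the leading coefficient becomes a sum of two or three products that must be shown not to cancel. I would handle this by a companion induction on the leading coefficients $L_i(j):=[\ell^{\lfloor (j-i)/3\rfloor}]\,Y_i^{(j)}$, deriving from the reduced recurrence an explicit relation expressing $L_i(j)$ through $L_i(j-1),L_i(j-2),L_i(j-3)$ with coefficients given by the leading parts of the (explicitly known, generically nonzero) coefficients $\sigma_t$, and verifying these combinations never vanish; the closed form \eqref{explicit-inverse} of $\chi_{n,k}$ offers an independent check. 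Everything else, namely the expansion itself and the degree bound, is routine once the recurrence has been cast in the reduced form above; the non-cancellation of leading coefficients is the one genuinely delicate ingredient.
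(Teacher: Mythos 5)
Your proposal is correct and takes essentially the same route as the paper's own proof: the paper likewise solves the recurrence for $\chi_{n,k}$ forward in $n=2j+q$, using the base case $j=0$ (where the coefficient $2a\sqrt{(-n)_2(N-n+1)_2}$ of the would-be out-of-range term vanishes) to express $\chi_{2(3)+q,k}$ through $\Xi_0,\Xi_1,\Xi_2$ with the stated degrees, and then proceeds by induction on $j$. The non-cancellation issue you flag --- needed to upgrade the degree bound $\deg Y_i^{(j)}\le\lfloor (j-i)/3\rfloor$ to the asserted equality --- is a genuine subtlety, but the paper's proof is silent on it as well, so your treatment is at least as complete as the original.
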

\begin{proof}
This statement is obtained directly from the recurrence relation. Setting $n=2j+q=0$ yields the element $\chi_{2(3)+q,k}$ in terms of the elements $\chi_{2(2)+q,k}$, $\chi_{2(1)+q,k}$, $\chi_{q,k}$ with degree $1$ in $\ell$ in the case of $\chi_{0+q,k}$ and $0$ for the others. The proposition then follows by induction.
\end{proof}
With this statement, we define the following linear functionals 
\begin{equation*}
\mathcal{L}_{i}^{(q)}[f(x)]=\sum_{x=0}^{\lfloor (N-q)/2\rfloor}\frac{a^{x}}{x!}\sqrt{\frac{(2x+q)!}{(N-2x-q)!}}\;\Xi_{i}(x)\,f(x),
\end{equation*}
for $i=0,1,2$. From the biorthogonality relation, it follows that
\begin{equation}
\mathcal{L}_{i}^{(q)}[\ell^{\gamma}\widehat{A}_{j}^{(q)}(\ell;c,N)]\begin{cases}=0,&\text{if}\, j\geqslant 3\gamma+i+1;\\ \neq 0, & \text{if}\,j=3\gamma+i.\end{cases}
\end{equation}
With these relations, the $d$-orthogonality of the family of polynomials $\widehat{A}_{j}^{(q)}(\ell;c,N)$ is manifest.

The algebraic setting at hand has thus allowed to completely characterize the family of polynomials $\widehat{A}_{j}^{(q)}(\ell;c,N)$. The Baker-Campbell-Hausdorff relation and the formulas from appendix A were used to derive the recurrence relation, the difference equations as well as the forward shift operator. Moreover, the action of the operators $J_{\pm}$ on the basis vectors $\ket{N,n}$ of the $(N+1)$-dimensional irreducible representation of $\mathfrak{su}(2)$ was used to compute explicitly the expression for the polynomials $\widehat{A}_{j}^{(q)}(\ell;c,N)$. As it turns out, these polynomials are $d$-orthogonal with $d=3$. They have a simple representation as ${}_2F_{3}$ generalized hypergeometric function and fall into the classification of Ben Cheikh \emph{et al.} due to the $1$-separability of the singleton set $\{\ell\}$. The inverse elements were found using the symmetry of the vectors $\ket{N,n}$ with respect to the action of the operators $J_{\pm}$; those inverse elements proved useful in obtaining the three orthogonality functionals for the $\widehat{A}_{j}^{(q)}(\ell;c,N)$'s. In the next subsection, we will look at the contractions of the $\widehat{A}_{j}^{(q)}(\ell;c,N)$ and their structural formulas.
\subsection{Contractions}
We now turn to the evaluation of the contraction of the  $\widehat{A}_{j}^{(q)}(\ell;c,N)$. We recall that the procedure of contraction, explained in section $2$, corresponds to taking the limit as $N\rightarrow \infty$ after renormalizing the operators $J_{\pm}$ by a factor of $\sqrt{N}$.
\subsubsection{Contraction of the recurrence relation}
Let us first apply the contraction to the recurrence relation. The renormalization of the $\mathfrak{su}(2)$ generators is implemented by substituting $c$ by $c/N^2$. We denote
\begin{equation*}
\lim_{N\rightarrow \infty}\widehat{A}_{j}^{(q)}(\ell;c,N)= \widehat{M}_{j}(\ell),
\end{equation*}
provided that this limit exists. After straightforward manipulations, one finds that after the renormalization, the terms in $\widehat{A}_{j-3}^{(q)}(\ell;c,N)$ and $\widehat{A}_{j-2}^{(q)}(\ell;c,N)$ are of order $\mathcal{O}(N^{-2})$ and $\mathcal{O}(N^{-1})$, respectively. Consequently, they tend to zero in the limit as $N\rightarrow \infty$ and the recurrence relation becomes
\begin{align*}
\ell\;\widehat{M}_{j}(\ell)&=\widehat{M}_{j+1}(\ell)+\Big\{j-c\,[2+4(2j+q)]\Big\}\widehat{M}_{j}(\ell) \nonumber \\
&-\Big\{c(2j+q)(2j+q-1)[1-4c]\Big\}\,\widehat{M}_{j-1}(\ell).
\end{align*}
The initial $5$-term recurrence relation contracts to a $3$-term recurrence relation. To identify to which orthogonal polynomial this relation corresponds, we set $-4c=\frac{d}{1-d}$. The recurrence becomes
\begin{align}
\label{recu-Meixner}
\ell\;\widehat{M}_{j}(\ell)&=\widehat{M}_{j+1}(\ell)+\frac{1}{1-d}\Big\{j+d(j+q+1/2)\Big\}\widehat{M}_{j}(\ell) \nonumber\\
&+\frac{d}{(1-d)^2}\Big\{(j+q/2)(j+q/2-1/2)\Big\}\widehat{M}_{j-1}(\ell).
\end{align}
We recognize in \eqref{recu-Meixner} the normalized recurrence relation of the Meixner polynomials $M_{j}(\ell;\beta,d)$ with $\beta=q+1/2$, for $q=0,1$. Thus, as expected, we have
\begin{equation*}
\lim_{N\rightarrow \infty}\widehat{A}_{j}^{(q)}(\ell;c,N)=M_{j}\left(\ell,q+1/2,\frac{c}{c-4}\right).
\end{equation*}
\subsubsection{Contraction of the matrix elements}
It is relevant to examine directly the contraction of the explicit formulas obtained for the matrix elements $\psi_{n,k}$. To take this limit, one must first expand $\psi_{n,k}$ by writing the generalized hypergeometric function as a truncated sum. Then, using the same renormalization as in the previous contraction, one straightforwardly obtains
\begin{equation}
\lim_{N\rightarrow \infty}\psi_{n,k}=\frac{a^{\ell}\,b^{j}}{\ell!\,j!}\sqrt{k!\,n!}\;{}_{2}F_{1}\left[\begin{aligned}&-j  -\ell \\ &(q+1/2)  \end{aligned};\frac{1}{4\,c}\right].
\end{equation}
Here, the explicit expression of the Meixner polynomials in terms of Gauss hypergeometric functions is recovered. The contraction at this level exhibits clearly the relationship between the parameters of the polynomial family $\widehat{A}_{j}^{(q)}(\ell;c,N)$ and the Meixner polynomials. We have $\beta=q+1/2$ and  $1-\frac{1}{d}=\frac{1}{4c}$. This is the result found in \cite{Vinet-2011}.\footnote{To recover the full result, one has to take $a\rightarrow -\overline{\omega}$, $b\rightarrow \omega$ and eliminate the diagonal term in expansion of \cite{Vinet-2011}.}
\subsubsection{Contraction of the generating function and coherent states}
The contraction limit of the generating function $G(k;\eta)$ can also be taken. However, it is easy to see that in the limit $N\rightarrow\infty$, the coherent states $\ket{N,\eta}$ as given by \eqref{coherent-states} are ill-defined. Indeed, under the contraction of the $\mathfrak{su}(2)$ algebra, the radius of the Bloch sphere, on which the coherent states are defined, must also be taken to infinity. Therefore, a well-defined contraction of the coherent states requires to take the limit $N\rightarrow\infty$ with the renormalization $\eta\rightarrow \eta/\sqrt{N}$ \cite{Perelomov-1986}. Note that with this renormalization, the coherent states become eigenstates of the operator $J_{-}/\sqrt{N}$ in the limit $N\rightarrow\infty$.

Performing the same renormalization, the contraction of the generating function yields that of the normalized Meixner polynomials
\begin{equation*}
\lim_{N\rightarrow\infty}G(k;\eta)=\sum_{n=0}^{\infty}\widehat{M}_{j}(\ell,q+1/2,c)\frac{(\eta/\sqrt{a})^{n}}{n!}=e^{b\eta^2 }\;\eta^{q}\;{}_1F_{1}\left[\begin{aligned}&-\ell \\ q&+1/2\end{aligned};-\frac{\eta^2}{4a}\right].
\end{equation*}

\subsubsection{Contraction of inverse matrix elements and biorthogonality}
In section $3.1$, the symmetry of the irreducible representation of $\mathfrak{su}(2)$ was used to obtain that $\chi_{n,k}=\psi_{N-k,N-n}^{\star}$. This result is due to the fact that in this representation, there exists a vector $\ket{N,0}$ which is annihilated by $J_{-}$ and a vector $\ket{N,N}$ which is annihilated by the operator $J_{+}$. In the contraction limit, the symmetry of the representation is not preserved. Indeed, the vectors of the $\mathfrak{h}_{1}$ irreducible representation are also labeled by a positive integer $n$, but this integer is unbounded from above; more precisely, there exits no basis vector such that $a^{\dagger}\ket{n}=0$, but the relation $a\ket{0}=0$ still holds. Consequently, all relations involving the inverse matrix elements $\chi_{n,k}$ must be recalculated with the contraction  applied directly to the operators. We stress that this loss of symmetry under the limit $N\rightarrow \infty$ is the \emph{reason} for the drastic change of behavior of the polynomials $\widehat{A}_{j}^{(q)}(\ell;c,N)$ and for the fact that, in particular, the $d$-orthogonality reduces to the standard orthogonality.

To find the expression for the inverse matrix elements in the contraction limit, one can simply calculate, directly from the operators, the matrix elements of $S^{-1}=e^{-ba^2}e^{-a(a^{\dagger})^2}$ written as $\chi_{n,k}=\Braket{n}{S^{-1}}{k}$ or take the contraction of the recurrence relation. Meixner polynomials with changes in the arguments are found.

Other limits involving objects such as the difference equation, the forward shift operators can be taken in the same way, yielding their counterparts for the Meixner polynomial.
\section{Characterization of the $\widehat{B}_{n}(k)$ family}
In this section, we fully characterize the $d$-orthogonal polynomials, with $d=2M-1$, in terms of which the matrix elements of the operator $Q=e^{aJ_{+}}e^{bJ_{-}^M}$ are expressed; these polynomials have already been shown to obey the recurrence relation \eqref{recu-B}. 
\subsection{Properties}
\subsubsection{Explicit expression for the matrix elements of $Q$}
The matrix elements $\varphi_{k,n}=\Braket{k,N}{Q}{N,n}$ can be computed explicitly by expanding the exponentials in series and using the actions \eqref{action-3} and \eqref{action-4}. To express the matrix elements in terms of generalized hypergeometric functions, one needs the identities
$$
(a)_{Mn}=M^{Mn}\prod_{\beta=0}^{M-1}\left(\frac{a+\beta}{M}\right)_{n}\;\;\text{and}\;\;
(Mn+q)!=M^{Mn}\,q!\prod_{\beta=0}^{M-1}\left(\frac{q+\beta+1}{M}\right)_{n}.
$$
Setting $n=Mj+q$ with $q\in\{0,\ldots,M-1\}$, one finds
\begin{equation*}
\varphi_{k,n}=\frac{a^{k-q}b^{j}}{j!(k-q)!q!}\sqrt{k!n!}\sqrt{\frac{(N-q)!(N-q)!}{(N-k)!(N-n)!}}\;{}_{1+M}F_{2M-1}\left[\begin{matrix}-j & \{\alpha_{m}\}\\ \{\beta_m\}& \{\gamma_{m}\}\end{matrix};\frac{-1}{(Ma)^Mb}\right],
\end{equation*}
with $\{\alpha_{m}\}=(q-k+m)/M$, $\{\beta_{m}\}=(q+m+1)/M$ with $q+m+1=M$ excluded from the sequence and $\{\gamma_{m}\}=(q-N+m)/M$, with $m$ running from $0$ to $M-1$. 

To obtain the exact expression for the $\widehat{B}_{n}(k)$ polynomials, one must pull out the ``ground state''
 $$
\varphi_{k,0}=a^{k}\binom{N}{k}^{1/2},
$$ 
and the normalization factor $a^{-n}\sqrt{\frac{(N-n)!}{N!n!}}$ from the expression of the matrix elements. The final expression reads, with $f=a^{M}b$:
\begin{equation*}
\varphi_{k,n}=\widehat{B}_{n}(k;f,N)\left[a^{-n}\sqrt{\frac{(N-n)!}{N!n!}}\right]\varphi_{k,0},
\end{equation*}
with
\begin{equation}
\widehat{B}_{n}(k;f,N)=\frac{(-1)^{q}(f)^{j}(-k)_{q}}{j!q!}\frac{(N-q)!n!}{(N-n)!}\;{}_{1+M}F_{2M-1}\left[\begin{matrix}-j & \{\alpha_{m}\}\\ \{\beta_m\}& \{\gamma_{m}\}\end{matrix};\frac{-1}{M^{M}f}\right].
\end{equation}
The recurrence relation \eqref{recu-B} indicates that these polynomials are $d$-orthogonal with $d=2M-1$, but it is clear that the set $\{\alpha_m\}$ is $s$-separable only for $M=1$, it is $1$-separable in this case. However, with this value of $M$, the polynomials are expressed as Gauss hypergeometric functions and are simply related to the Krawtchouk polynomials. For $M=1$, the matrix elements are
\begin{equation*}
\varphi_{k,n}=a^{k}b^{n}\binom{N}{n}^{1/2}\binom{N}{k}^{1/2}K_{n}(k;p,N),
\end{equation*}
with $p=-ab$. For any other value $M$, the polynomials are $d$-orthogonal extensions of the Krawtchouk ones, but they fall outside the classification of \cite{Cheikh-2008}.
\subsubsection{Matrix elements of $Q^{-1}$}
The matrix elements of the inverse operator $\varsigma_{n,k}=\Braket{n,N}{Q^{-1}}{N,k}$ with $Q^{-1}=e^{-bJ_{-}^{M}}e^{-aJ_{+}}$ can also be computed directly. Just as for the matrix elements $\psi_{n,k}$ of the $S$ operator, the matrix elements $\varsigma_{n,k}$ of the inverse operator $Q^{-1}$ possess a reflection symmetry; we indeed find
\begin{equation}
\varsigma_{n,k}=\varphi_{N-k,N-n}^{\star},
\end{equation}
where $\star$ denotes the replacements $a\rightarrow -a$ and $b\rightarrow -b$. In terms of the matrix elements $\varphi_{k,n}$, this biorthogonality relation reads
$$
\sum_{k=0}^{N}\varphi_{k,n}\varphi_{N-k,N-m}^{\star}=\delta_{nm}.
$$
 Because of the asymmetric form of the operator, the explicit expression for the matrix elements $\varsigma_{n,k}$ heavily depends on the behavior of $k$ and $N$ modulo $M$. An exact expression would thus comprise $M^2$ cases of residues.

For definiteness, we shall set $M=2$ whenever general expressions cannot be found in closed-form. Note that this case is relevant in the contraction limit because such powers of the creation-annihilation operators appear in the oscillator algebra $\mathfrak{sch}_1$ \cite{Vinet-2011}.
\subsubsection{Biorthogonality relations}
The biorthogonality of the $\widehat{B}_{n}(k;f,N)$ polynomials can be written for any value of $M$. Indeed, one has
\begin{equation}
\sum_{k=0}^{N}w_{k}\widehat{B}_{n}(k;f,N)\widehat{B}_{N-m}(N-k;f',N)=(-1)^{n}\delta_{nm},
\end{equation}
with $w_{k}=\frac{(-1)^{k}}{k!\,(N-k)!}$ and $f'=(-1)^{M+1}f$. Note that when $M=1$, this is not exactly the orthogonality relation of the Krawtchouk polynomials. To obtain the orthogonality of the Krawtchouk polynomials from this equation, one must remove the normalization factor and use Pfaff's transformation. This transformation is only available for ${}_2F_{1}$ hypergeometric functions, thus, $M=1$ is the only case for which the biorthogonality relation degenerates into the standard orthogonality.
\subsubsection{Generating function}
The generating function for the $\widehat{B}_{n}(k;f,N)$ polynomials is obtained as in the previous section; it can be derived explicitly for any value of $M$. Define
\begin{equation}
G(k;\eta)=\frac{1}{\varphi_{k,0}}\sum_{n=0}^{N}\binom{N}{n}^{1/2}\varphi_{k,n}\eta^{n}.
\end{equation}
Substituting the expression for the matrix elements, one gets
\begin{equation*}
G(k;\eta)=\sum_{n=0}^{N}\widehat{B}_{n}(k;f,N)\frac{(\eta/a)^{n}}{n!},
\end{equation*}
yielding a generating function for the polynomials  $\widehat{B}_{n}(k;f,N)$. Once again, this generating function is expressed as the overlap between the vector $\ket{N,k}$ and $Q\ket{N,\eta}$. Indeed, it follows from the definition of the coherent states \eqref{coherent-states} that
\begin{equation*}
G(k;\eta)=\frac{1}{\varphi_{k,0}}\Braket{k,N}{Q}{N,\eta}.
\end{equation*}
Using the action of the ladder operators \eqref{action-coherent} and \eqref{action-coherent-2}, one easily finds
\begin{equation}
G(k;\eta)=\sum_{\mu=0}^{N}\frac{(-\eta/a)^{\mu}}{\mu!}(-k)_{\mu}\; {}_MF_{0}\left[\{\delta_{m}\};(-1)^{M}(M\eta)^{M}b\right],
\end{equation}
with $\{\delta_{m}\}=\frac{\mu-N+m}{M}$ with $m=0,\ldots,M-1$. After a slight adjustment in the parameters and use of the identity ${}_1F_{0}(-a;b)=(1-b)^{a}$, the generating function for the Krawtchouk (see Appendix B) polynomials can be recovered for $M=1$.
\subsubsection{Difference equation}
The algebraic setting can be used to derive the difference equation satisfied by the matrix elements $\varphi_{n,k}$. Again, one writes
\begin{equation*}
(n-N/2)\,\varphi_{k,n}=\Braket{k,N}{QJ_{0}}{N,n}=\Braket{k,N}{QJ_{0}Q^{-1}Q}{N,n}.
\end{equation*}
From the Baker--Campbell--Hausdorff relation, it follows that
\begin{equation*}
QJ_{0}Q^{-1}=J_{0}-aJ_{+}+Mb[J_{-}+2aJ_{0}-a^2J_{+}]^{M}.
\end{equation*}
Setting $M=2$, the difference equation for the matrix elements is found to be
\begin{align}
(n-k)\,\varphi_{k,n}&=2b\sqrt{(k+1)_2(k-N)_2}\,\varphi_{k+2,n}+4ab\,\sqrt{(k+1)(N-k)} \,\zeta_{1}\,\varphi_{k+1,n}\nonumber \\ &+2\,a^2\,b\,\zeta_{0}\,\varphi_{k,n}-4a^3b\sqrt{k(N-k+1)}\;\zeta_{1}\;\varphi_{k-1,n}\nonumber\\
&-a\sqrt{k(N-k+1)}\,\varphi_{k-1,n}
+2a^4b\sqrt{(-k)_2(N-k+1)_2}\,\varphi_{k-2,n},
\end{align}
with the coefficients $\zeta_{i}$:
\begin{align*}
\zeta_{0}&=6k^2-6kN+N(N+1),\\
\zeta_{1}&=2k-N-1.
\end{align*}
For the $\widehat{B}_{n}(k)$ polynomials, this equation becomes
\begin{align}
n\widehat{B}_{n}(k;f,N)&=(2f)(k-N)_{2}\widehat{B}_{n}(k+2;f,N)+(4\,f\,\zeta_1)(N-k)\widehat{B}_{n}(k+1;f,N)\nonumber\\
&+(k+2\,f\,\zeta_0)\widehat{B}_{n}(k;f,N)-k(4f\,\zeta_1+1)\widehat{B}_{n}(k-1;f,N)\nonumber\\
&+(2f)(-k)_2\widehat{B}_{n}(k-2;f,N).
\end{align}
Using the same identities as before, this difference equation can be written as an eigenvalue equation.
\subsubsection{Orthogonality functionals}
The $d$-orthogonality functionals can be computed for the polynomials $\widehat{B}_{n}(k;c,N)$. First we have:
\begin{equation*}
(k-N/2)\,\varsigma_{n,k}=\Braket{n,N}{QJ_{0}}{N,k}=\Braket{n,N}{QJ_{0}Q^{-1}Q}{N,k}.
\end{equation*}
For $M=2$, the formula \eqref{result-conjugaison} yields
\begin{equation*}
Q^{-1}J_{0}Q=J_{0}-2bJ_{-}^2+aJ_{+}+2ab(1+2J_{0})J_{-}-4ab^2J_{-}^3.
\end{equation*}
Substitution in the first equation gives a recurrence relation for the matrix elements of the inverse operator $Q^{-1}$; this relation is 
\begin{align}
(k-n)\;\varsigma_{n,k}&=a\sqrt{n(N-n+1)}\;\varsigma_{n-1,k}+2ab(2n+1-N)\sqrt{(n+1)(N-n)}\;\varsigma_{n+1,k}\nonumber\\
& -2b\sqrt{(n+1)_2(n-N)_2}\;\varsigma_{n+2,k}-4ab^2\sqrt{-(n+1)_3(n-N)_{3}}\;\varsigma_{n+3,k}.
\end{align}
The form of this recurrence relation suggests the following proposition:
\begin{proposition}
We have
\begin{equation}
\varsigma_{n,k}=\sum_{i=0}^{2}Y_{i}^{(n)}(k)\Xi_{i}(k),
\end{equation}
with $\Xi_{i}(k)=\varsigma_{i,k}$ and $Y_{i}^{(n)}$ a polynomial in $k$. If $n=3\gamma+\delta$ with $\delta=0,1,2$; then the degree of the polynomial is $\gamma$ when $i\leqslant\delta$ and $\gamma-1$ otherwise.
\end{proposition}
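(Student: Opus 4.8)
The argument mirrors that of the analogous Proposition proved above for the $\chi_{n,k}$ and proceeds by induction on $n$, driven by the recurrence relation for the $\varsigma_{n,k}$ established just above. The first step is to solve that five-term relation for its top-index term, writing $\varsigma_{n+3,k}$ as a combination of $\varsigma_{n-1,k}$, $\varsigma_{n,k}$, $\varsigma_{n+1,k}$ and $\varsigma_{n+2,k}$. This is legitimate on the relevant range $0\leqslant n\leqslant N-3$, since the coefficient of $\varsigma_{n+3,k}$ is proportional to $\sqrt{-(n+1)_3(n-N)_3}$, which does not vanish there. Iterating this forward relation expresses every $\varsigma_{n,k}$ in terms of the three seeds $\Xi_i(k)=\varsigma_{i,k}$, $i=0,1,2$, with coefficients $Y_i^{(n)}(k)$ that are polynomials in $k$.

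The base cases $n=0,1,2$ are immediate: there $\varsigma_{n,k}=\Xi_n(k)$, so $Y_i^{(n)}=\delta_{in}$ is constant, in agreement with the claimed degrees for $\gamma=0$ (with the understanding that a ``degree $\gamma-1$'' with $\gamma=0$ denotes the zero polynomial). For the inductive step, I would substitute the expansions of $\varsigma_{n-1,k},\,\varsigma_{n,k},\,\varsigma_{n+1,k},\,\varsigma_{n+2,k}$ into the isolated relation. The crucial structural observation is that the only $k$-dependent coefficient appearing in the recurrence is the factor $(k-n)$ multiplying $\varsigma_{n,k}$; every other coefficient is an explicit constant in $k$ built from $a$, $b$ and square roots of Pochhammer symbols in $n$ and $N$. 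Matching coefficients of each $\Xi_i(k)$ therefore yields
\begin{equation*}
Y_i^{(n+3)}(k)=A_n\,(k-n)\,Y_i^{(n)}(k)+B_n\,Y_i^{(n-1)}(k)+C_n\,Y_i^{(n+1)}(k)+D_n\,Y_i^{(n+2)}(k),
\end{equation*}
with $A_n,B_n,C_n,D_n$ independent of $k$, confirming at once that $Y_i^{(n+3)}$ is again a polynomial in $k$.

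The degree bookkeeping is then a finite case analysis on the residue $\delta$ of $n$ modulo $3$. Writing $n=3\gamma+\delta$, the single degree-raising term $A_n(k-n)Y_i^{(n)}(k)$ has degree $1+\deg Y_i^{(n)}$, which by the induction hypothesis equals $\gamma+1$ when $i\leqslant\delta$ and $\gamma$ otherwise; since $n+3=3(\gamma+1)+\delta$, this is exactly the degree asserted for $Y_i^{(n+3)}$. One then checks in each sub-case that the three remaining terms never exceed this value, so the claimed number is at least an upper bound on the degree.

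The main obstacle is upgrading this to the stated equality: in several residue sub-cases (for instance $\delta=0$, $i=1$) two or three of the four contributions tie at the maximal degree, so sharpness requires that their leading coefficients not cancel. This is where the explicit constants $A_n,B_n,C_n,D_n$ enter; for generic $a$, $b$ and $N$ they do not conspire to annihilate the top coefficient, and the equality of degrees follows. Carrying out this finite verification completes the induction.
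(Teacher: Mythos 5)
Your proof takes exactly the route the paper intends: its entire proof of this proposition reads ``The proof follows from the recurrence relation and by induction,'' which is precisely the isolate-$\varsigma_{n+3,k}$-and-induct argument you spell out (and which mirrors the slightly more explicit proof of the analogous proposition for the $\chi_{n,k}$). If anything, your write-up is more careful than the paper's, since you verify the non-vanishing of the coefficient of the top term and explicitly flag the leading-coefficient cancellation needed to get exact degrees rather than upper bounds---a point the paper passes over in silence.
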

\begin{proof}
The proof follows from the recurrence relation and by induction.
\end{proof}
This suggests the definition of the linear functional
\begin{equation}
\mathcal{M}_{i} f(x)=\sum_{x=0}^{N}a^{x}\binom{N}{x}^{1/2}\Xi_{i}(x)f(x).
\end{equation}
It thus follows that 
\begin{align}
\mathcal{M}_{i}(k^{\gamma}\widehat{B}_{n}(k;c,N))\begin{cases}=0 & \text{if}\,n\geqslant 3\gamma+i+1\\
                                                 \neq 0 & \text{if}\, n=3\gamma+i
                                                 \end{cases}
\end{align}
General orthogonality functionals could be defined in full generality for the operator $Q^{-1}$. Indeed, one could emulate the procedure to build $2M-1$ orthogonality functionals.

\subsection{Contractions}
The contractions of the polynomials  $\widehat{B}_{n}(k)$ are expected to yield the $d$-Charlier polynomials considered in \cite{Vinet-2009}. The loss of symmetry also occurs in the contractions of the polynomials $\widehat{B}_{n}(k)$; consequently, contractions of the orthogonality functionals and ladder operators cannot be taken directly. As those cases were treated in full generality in \cite{Vinet-2009}, we simply show how the matrix elements $\varphi_{k,n}$ contract to the particular case of $d$-Charlier polynomials that was considered.
\subsubsection{Contraction of $\varphi_{k,n}$}
The required renormalization is $a\rightarrow a/\sqrt{N}$ and $b\rightarrow b/\sqrt{N^M}$. Once again, the limit is taken by first expanding the generalized hypergeometric function as a truncated sum and performing the indicated substitution. Upon simple transformations, the result is found to be
\begin{equation}
\lim_{N\rightarrow \infty}\varphi_{k,n}=\frac{a^{k-q}b^{j}}{j!(k-q)!q!}\sqrt{k!n!}\;{}_{1+M}F_{M-1}\left[\begin{matrix}-j & \{\alpha_m\}\\ \{\beta_{m}\} & -\end{matrix};\frac{(-1)^{M+1}}{a^{M}b}\right],
\end{equation}
which gives precisely the matrix elements obtained in \cite{Vinet-2009}.
\section{Conclusion}
We studied the $d$-orthogonal polynomials related to the classical Lie algebra $\mathfrak{su}(2)$. We showed the matrix elements of the operators $S=e^{aJ_{+}^2}e^{bJ_{-}^2}$ and $Q=e^{aJ_{+}}e^{bJ_{-}^{M}}$ were given in terms of two families of polynomials $\widehat{A}_{j}^{(q)}(\ell;c,N)$ and $\widehat{B}_{n}(k;c,N)$. Using the algebraic setting, we characterized these polynomials; their explicit expressions in terms of hypergeometric functions allowed to identify those that belong to the classification given in \cite{Cheikh-2008}. 

We also studied the contraction limit in which $\mathfrak{su}(2)$ is sent to the Heisenberg algebra $\mathfrak{h}_1$. We showed that the $\widehat{A}_{j}^{(q)}(\ell;c,N)$ tend to standard Meixner polynomials when $N\rightarrow \infty$. The $\widehat{A}_{j}^{(q)}(\ell;c,N)$ can therefore be seen as discrete $d$-orthogonal versions of the Meixner polynomials. In addition, it was shown that the polynomials  $\widehat{B}_{n}(k;c,N)$ are some $d$-orthogonal Krawtchouk polynomials and that they converge to the $d$-Charlier polynomials in the contraction limit. 

In \cite{Vinet-2011}, the exponentials of linear and quadratic polynomials in the generators of the $\mathfrak{h}_{1}$ algebra were unified to yield matrix orthogonal polynomials; these considerations were motivated by the link with the quantum harmonic oscillator. Similarly, considerations regarding the discrete finite quantum oscillator would suggest to convolute the corresponding matrix elements. While providing physically relevant amplitudes for the quantum finite oscillator, this study could lead to $d$-orthogonal matrix polynomials, such as considered in \cite{Sorokin-1997}. We plan to report elsewhere on this.
\phantom{\cite{Koekoek-2010}}
\section*{Acknowledgments}
A.Z. wishes to thank the CRM for its hospitality. V.X.G. has benefited from a scholarship from Fonds qu\'eb\'ecois de la recherche sur la nature et les technologies (FQRNT). The research of L.V. is supported in part through a grant from the Natural Sciences and Engineering Research Council (NSERC) of Canada.
\appendix
\section*{Appendix A--Useful formulas for $\mathfrak{su}(2)$}
The relation
\begin{equation*}
J_{0}J_{\pm}^n=J_{\pm}^{n}(J_0\pm n),
\end{equation*}
holds and  can be proven straightforwardly by induction on $n$. Using this identity and the relations $(J_{\pm})^{\dagger}=J_{\mp}$ as well as $J_0^{\dagger}=J_{0}$ , it follows that for $Q(J_{\pm})$ denoting a polynomial in $J_{\pm}$, one has
$$
[Q(J_{\pm}),J_0]=\mp J_{\pm}Q'(J_{\pm}),
$$
where $Q'(x)$ denotes the derivative with respect to $x$. The preceding formula and the Baker-Campbell-Hausdorff relation lead to the identity
$$
e^{Q(J_{\pm})}J_0e^{-Q(J_{\pm})}=J_0\mp J_{\pm}Q'(J_{\pm}).
$$
In addition, we have the relations
\begin{align*}
[J_{+},J_{-}^{n}]&=2nJ_{0}J_{-}^{n-1}+n(n-1)J_{-}^{n-1},\\
[J_{-},J_{+}^{n}]&=-2nJ_{+}^{n-1}J_{0}-n(n-1)J_{+}^{n-1},
\end{align*}
which can also be proved by induction on $n$. With the help of the previous identities, one obtains
\begin{align*}
[J_{+},Q(J_{-})]&=2J_{0}Q'(J_{-})+J_{-}Q''(J_{-}),\\
[J_{-},Q(J_{+})]&=-2Q'(J_{+})J_{0}-J_{+}Q''(J_{+}),
\end{align*}
From these formulas it follows that
\begin{align*}
e^{Q(J_{-})}J_+e^{-Q(J_{-})}=J_+ -2J_{0}Q'(J_{-})-J_{-}[Q''(J_{-})+Q'(J_{-})^2],\\
e^{Q(J_{+})}J_-e^{-Q(J_{+})}=J_- +2Q'(J_{+})J_{0}+J_{+}[Q''(J_{+})-Q'(J_{+})^2].
\end{align*}
\section*{Appendix B--Meixner, Hermite and Krawtchouk polynomials}
\subsection*{Meixner polynomials}
The Meixner polynomials have the hypergeometric representation
$$
M_{n}(x;\beta,d)={}_2F_{1}\left[\begin{aligned}-n&,-x\\ &\beta\end{aligned};1-\frac{1}{d}\right].
$$
They satisfy the normalized recurrence relation
$$
x\widehat{B}_{n}(x)=\widehat{B}_{n+1}(x)+\frac{n+(n+\beta)d}{1-d}\widehat{B}_{n}(x)+\frac{n(n+\beta-1)d}{(1-d)^2}\widehat{B}_{n-1}(x),
$$
with
$$
M_{n}(x;\beta,d)=\frac{1}{(\beta)_n}\left(\frac{d-1}{d}\right)^n\widehat{B}_{n}(x).
$$
\subsection*{Hermite polynomials}
The Hermite polynomials have the hypergeometric representation
$$
H_{n}(x)=(2x)^{n}\;{}_2F_{0}\left[\begin{aligned}-\frac{n}{2}&,\frac{1-n}{2}\\ &-\end{aligned};-\frac{1}{x^2}\right].
$$
\subsection*{Krawtchouk polynomials}
The Krawtchouk polynomials have the hypergeometric representation
$$
K_{n}(x;p,N)={}_2F_{1}\left[\begin{aligned}-n&,-x\\ -&N\end{aligned};\frac{1}{p}\right].
$$
Their orthogonality relation is
$$
\sum_{x=0}^{N}\binom{N}{x}p^{x}(1-p)^{N-x}K_m(x;p,N)K_{n}(x;p,N)=\frac{(-1)^{n}n!}{(-N)_n}\left(\frac{1-p}{p}\right)^{n}\delta_{nm}.
$$
They have the generating function
$$
(1+t)^{N-x}\left(1-\frac{1-p}{p}t\right)^{x}=\sum_{n=0}^{N}\binom{N}{n}K_{n}(x;p,N)t^{n}.
$$
For further details, see \cite{Koekoek-2010}.


\begin{thebibliography}{10}

\bibitem{VanAssche-2011}
W.~Van Assche.
\newblock The nearest neighbor recurrence coefficients for multiple orthogonal
  polynomials.
\newblock {\em Journal of {A}pproximation {T}heory}, 163:1427--1448, 2011.

\bibitem{VanAssche-2001}
W.~Van Assche and E.~Coussement.
\newblock Some classical multiple orthogonal polynomials.
\newblock {\em Journal of Computational and Applied Mathematics}, 127:317--347,
  2001.

\bibitem{Cheikh-2008}
Y.~Ben Cheikh and A.~Ouni.
\newblock Some generalized hypergeometric $d$-orthogonal polynomial sets.
\newblock {\em Journal of Mathematical Analysis and Applications},
  343(1):464--478, 2008.

\bibitem{Cheikh-2003}
Y.~Ben Cheikh and A.~Zaghouani.
\newblock Some discrete d-orthogonal polynomial sets.
\newblock {\em Journal of Computational and Applied Mathematics},
  156(2):253--263, 2003.

\bibitem{deBruin-1984}
M.~de~Bruin.
\newblock Some explicit formulae in simultaneous {P}ad{\'e} approximation.
\newblock {\em Linear Algebra and its Applications}, 63:271--281, 1984.

\bibitem{deBruin-1985}
M.~de~Bruin.
\newblock Simultaneous {P}ad{\'e} approximation and orthogonality.
\newblock In C.~Brezinski, A.~Draux, A.~Magnus, P.~Maroni, and A.~Ronveaux,
  editors, {\em Polyn{\^o}mes Orthogonaux et Applications}, volume 1171 of {\em
  Lecture Notes in Mathematics}, pages 74--83. Springer Berlin / Heidelberg,
  1985.
\newblock 10.1007/BFb0076532.

\bibitem{deBruin-1990}
M.~de~Bruin.
\newblock Some aspects of simultaneous rational approximation.
\newblock {\em Numerical Analysis and Mathematical Modelling}, 24:51--84, 1990.

\bibitem{Koekoek-2010}
R.~Koekoek, P.A. Lesky, and R.F. Swarttouw.
\newblock {\em Hypergeometric {O}rthogonal {P}olynomials and their
  q-analogues}.
\newblock Springer, 1 edition, 2010.

\bibitem{Koornwinder-2008}
T.H. Koornwinder.
\newblock Representations of {SU}(2) and {J}acobi polynomials, 2008.

\bibitem{Maroni-1989}
P.~Maroni.
\newblock L'orthogonalit{\'e} et les r{\'e}currences de polyn{\^o}mes d'ordre
  sup{\'e}rieur {\`a} deux.
\newblock {\em Annales de la {F}acult{\'e} des sciences de Toulouse},
  10(1):105--139, 1989.

\bibitem{Perelomov-1986}
A.~Perelomov.
\newblock {\em Generalized Coherent States and Their Applications}.
\newblock Springer, 1986.

\bibitem{Sorokin-1997}
V.N. Sorokin and J.~Van Iseghem.
\newblock Algebraic aspects of matrix orthogonality for vector polynomials.
\newblock {\em Journal of {A}pproximation {T}heory}, 90(1):97--116, 1997.

\bibitem{Iseghem-1987}
J.~van Iseghem.
\newblock Laplace transform inversion and {P}ad{\'e}-type approximants.
\newblock {\em Applied Numerical Mathematics}, 3(6):529--538, 1987.

\bibitem{Vinet-2009}
L.~Vinet and A.~Zhedanov.
\newblock Automorphisms of the {H}eisenberg-{W}eyl algebra and $d$-orthogonal
  polynomials.
\newblock {\em Journal of {M}athematical {P}hysics}, 50(3):1--19, 2009.

\bibitem{Vinet-2011}
L.~Vinet and A.~Zhedanov.
\newblock Representations of the {S}chr{\"o}dinger group and matrix orthogonal
  polynomials.
\newblock {\em Journal of Physics A: Mathematical and Theoretical},
  44(35):1--28, 2011.

\end{thebibliography}
\end{document}